\newtheorem{theorem}{Theorem}
\newtheorem{lemma}{Lemma}
\newtheorem{definition}{Definition}
\begin{document}

\title{Quantum Advantage in Learning Mixed Unitary Channels}

\author{
    Yue Tu\thanks{Email: \texttt{tuyue3@gmail.com}.} 
    \and 
    Liang Jiang\thanks{Email: \texttt{liangjiang@uchicago.edu}.}
}

\date{November 4, 2025}

\maketitle

\begin{abstract}
We study the task of learning mixed unitary channels using Fisher information, under different quantum resource assumptions including ancilla and concatenation. Our result shows that the asymptotic sample complexity scales as $\frac{r}{d\varepsilon^2}$, where $r$ is the rank of the channel (i.e.\ the number of different unitaries), $d$ is the dimension of the system, and $\varepsilon^2$ is the mean-square error. Thus the critical resource is the ancilla, which mirrors the result in~\cite{chen2022quantum} but in a more precise form, as we point out that $r$ is also important. Additionally, we demonstrate the practical potential of mixed unitary channels by showing that random mixed unitary channels are easy to learn.
\end{abstract}

\tableofcontents

\section{Introduction}

Since the emergence of the concept of quantum computation, researchers have been seeking tasks where quantum computers could outperform their classical counterparts. Significant progress has been made in using quantum computers to solve computational problems, such as Shor’s algorithm, Grover’s algorithm, and quantum Hamiltonian simulation. More recently, researchers have discovered that quantum computers can also be used for the \emph{learning of quantum systems}, which, in contrast to previous applications, does not utilize quantum computation for processing information but rather for extracting information from quantum systems more efficiently.

This new paradigm for demonstrating quantum advantage is known as \emph{Quantum Learning}. Learning a quantum system is inherently difficult because information can only be obtained through quantum measurements, which yield random outcomes according to certain probability distributions and collapse the system’s state. Consequently, it is inevitable that many samples must be prepared and measured repeatedly. Quantum learning aims to use quantum computational resources to reduce the number of samples required, compared with the classical approach of performing projective measurements directly on the system. Different learning algorithms target different types of quantum systems, employ various parameterizations, and utilize different kinds of quantum resources.

In the study of quantum channel learning, considerable progress has been achieved for several specific models of quantum channels. In particular, one useful modeling framework is the \emph{finite mixture parameterization}, where the channel is assumed to apply one operation from a predefined set with certain probabilities. This approach is more tractable, as it avoids the exponential growth of parameters associated with multi-qubit operations. The most well-known example of this type is the Pauli channel, which on $n$ qubits applies an $n$-qubit Pauli operator $P_{\boldsymbol{a}}$ with probability $p_{\boldsymbol{a}}$, i.e.,
\begin{equation}
    \Lambda_P(\cdot) = \sum_{a\in Z_2^{2n}}p_a P_a(\cdot)P_a.
\end{equation}
Pauli channels have been extensively investigated in \cite{chen2022quantum}, \cite{flammia2020efficient} and \cite{chen2024tight}.

In this work, we study a more general model than the Pauli channel—namely, the \emph{mixed unitary channel}, which applies one unitary operation from a predefined set with a certain probability distribution. Specifically:
\begin{equation}\label{eq:muc}
    \Lambda(\cdot) = \sum_{a=1}^r p_a U_a (\cdot) U_a^\dagger,
\end{equation}
where the unitaries $\{U_a\}_{a=1}^r$ are known and the probabilities $\{p_a\}_{a=1}^r$ are unknown.

Compared with the Pauli channel, the mixed unitary channel possesses two additional degrees of freedom. The first is that the number of unitaries, denoted by $r$ and later referred to as the \emph{rank}, is flexible, whereas for the Pauli channel this number is fixed at $4^n$, where $n$ is the number of qubits. The second is that the operations can be general unitaries, not restricted to Pauli operators. These two degrees of freedom have not been systematically studied before, and we argue that this unexplored regime is both theoretically valuable and practically relevant.

The first degree of freedom---the rank---is of particular theoretical interest, as we will later show that it determines how many ancilla qubits are required to achieve optimal learning efficiency. In contrast to previous work~\cite{chen2022quantum}, which showed that learning a fixed-rank Pauli channel requires at least $n$ ancilla qubits to avoid exponential overhead, our sample complexity lower bound,
\begin{equation}
    N \ge \frac{r}{d\varepsilon^2} = \frac{r}{2^{n+m}\varepsilon^2},
\end{equation}
implies that the necessary number of ancilla qubits satisfies $m = \log_2(r / 2^n)$. Hence, by reducing the rank $r$, we can in principle reduce the number of required ancilla qubits---often considered a critical and costly resource to prepare in practice.

The second degree of freedom highlights the potential application of the mixed unitary channel as an \emph{error mitigation model}. Compared with the Pauli channel, which has been widely adopted as the standard model for error mitigation~\cite{cai2023qem}, the mixed unitary model allows one to select from a broader set of unitaries beyond the Pauli group. This additional flexibility can potentially yield a more accurate representation of realistic noise processes. Interestingly, we also show a counterintuitive result: random unitary channels are, in fact, easier to learn. Thus, the intricate mathematical structure of the Pauli group is not a prerequisite for efficient learnability.

We emphasize that our goal is not to study the most general learning model. Since our main focus is on understanding the role of quantum resources---such as ancilla systems and circuit concatenation---we restrict our attention to the \emph{independent one-measurement-per-state} (IOMS) model. In this setting, there are no mid-circuit measurements, feedforward control, or adaptive strategies. As shown in~\cite{chen2024tight}, these resources are not essential in the sense that they do not provide exponential improvement in efficiency. Our analysis is grounded in the Fisher-information framework (originally suggested by Hyukgun Kwon), which has the advantage of yielding sample-complexity bounds with clear, intuitive interpretations of how ancilla dimensions and channel rank affect learnability. The trade-off, however, is that our guarantees are expressed in terms of the mean-square error (MSE) in the asymptotic regime.

In the following, we will first clarify our notation, then provide a precise definition of the Fisher information framework used in our analysis. Next, we present our sample complexity lower bound for learning mixed unitary channels, and finally, we demonstrate that random mixed unitary channels are indeed easy to learn.

\section{Notation}

We summarize the main notations used throughout the paper.
\begin{itemize}
    \item $\mathcal{H}$: Hilbert space of dimension $d$.
    \item $\rho$: quantum state (density operator) on $\mathcal{H}$.
    \item Mixed unitary channel: $\Lambda(\rho) = \sum_{a=1}^r p_a U_a \rho U_a^\dagger$, where $\{U_a\}$ are known unitaries of dimension $d_\Lambda$ and $p = (p_1, \ldots, p_r)$ is an unknown probability vector.
    \item $n$: number of qubits for the channel sample ($2^n = d_\Lambda$).
    \item $m$: number of ancilla qubits for the channel sample ($2^{n+m} = d$).
    \item $N$: number of channel uses (samples).
    \item $\theta$: parameter vector in $\mathbb{R}^r$.
    \item $I(\theta)$: Fisher information matrix associated with the statistical model $P(X\mid \theta)$.
    \item $\mathbb{E}[\cdot]$: expectation over measurement outcomes.
    \item $\varepsilon$: target root-mean-square estimation error.
    \item $\mathbf{1}$: all-ones vector, i.e.\ $\mathbf{1} = (1,1,\ldots,1)^\top \in \mathbb{R}^r$.
    \item $u$: uniform probability vector, i.e.\ $u = \mathbf{1}/r$.
\end{itemize}
All logarithms are base $2$ unless otherwise stated.

\section{Main Results}

In this section, we present our main results on the sample complexity for the task of learning a mixed unitary channel. A mixed unitary channel applies one unitary operation from a set of $r$ unitaries according to a discrete probability distribution $p$:
\begin{equation}
    \Lambda(\rho) = \sum_{a=1}^r p_a\, U_a \rho U_a^\dagger.
\end{equation}
Our goal is to estimate $p$, assuming that the unitaries $\{U_a\}$ are known. Specifically, we aim to determine the minimal scaling of the number of samples required to learn $p$ up to a mean-square error of $\varepsilon^2$ in the asymptotic regime as $\varepsilon \rightarrow 0$. Each use of the channel is counted as one sample.

Consider the most general learning protocol, in which the learner is allowed to perform any operation permitted by quantum mechanics. Such a protocol may involve four types of operations:  
(1) preparing a quantum state,  
(2) applying a unitary operation,  
(3) applying the mixed unitary channel, and  
(4) performing a POVM measurement.  
Any learning strategy can be represented as a sequence of these operations, which ultimately produces an estimate $\tilde{p}$ based on the collected measurement outcomes.

However, we do not aim to study the most general learning scenario. Instead, we impose two restrictions.  
First, after each measurement is performed on a quantum state, the state is discarded.  
Second, the learner does not adapt future operations based on previous measurement results.  
We refer to this restricted setting as the \emph{independent one-measurement-per-state} (IOMS) protocol.  
In the IOMS setting, each learning round independently performs a fixed sequence of operations:  
it begins with (1) state preparation, then executes (2) unitary operations and (3) channel applications (possibly multiple times), and concludes with (4) measurement.  
This procedure is illustrated in Fig.~\ref{fig:one-measure-proto}.

\begin{figure}[t]
    \centering
    \includegraphics[width=0.5\linewidth]{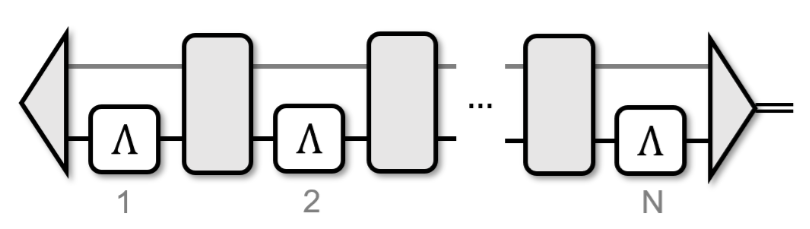}
    \caption{Schematic illustration of the independent one-measurement-per-state (IOMS) protocol.}
    \label{fig:one-measure-proto}
\end{figure}

Although restricted, the IOMS protocol is sufficiently rich to yield meaningful insights into how quantum resources—such as ancilla systems and channel concatenation—contribute to the learning process.  
Here, \emph{ancilla} refers to the ability to prepare an entangled state whose total dimension exceeds that of the quantum channel.  
\emph{Concatenation} refers to the ability to apply the mixed unitary channel multiple times on a single quantum state before measurement.  
Both ancilla and concatenation are considered valuable quantum resources, as they represent capabilities that are only accessible to a quantum computer.

In the following, we present the proofs of our main results, divided into three parts.  
First, we map the general IOMS protocol to a well-defined statistical model on which Fisher information can be defined.  
Second, we establish a sample-complexity lower bound for general IOMS protocols with a given number of ancilla qubits and channel concatenations.  
Finally, we propose an algorithm that efficiently learns random mixed unitary channels.

\subsection{Definition of Fisher Information}

In this section, we map the IOMS protocol to a parametric statistical model on which we can define the Fisher information. A (parametric) statistical model is specified by a random variable $X$ taking values in a sample space $\mathcal{X}$, together with a family of probability distributions
\begin{equation}
    \{ P_\theta(X \in \cdot) : \theta \in \Theta \}
\end{equation}
on $\mathcal{X}$, parameterized by $\theta$.  
For the IOMS protocol, $X$ corresponds to the collection of POVM measurement outcomes, and $\theta$ represents the parameters of the mixed unitary channel, i.e., the probability weights associated with different unitaries.

We now define the Fisher information, which quantifies how sensitive the probability distribution $P_\theta$ is to infinitesimal changes in the parameter $\theta$. By definition,
\begin{equation}\label{eq:FImat-def}
    I(\theta) = \mathbb{E}\!\left[
        (\nabla_\theta \log P_\theta(X)) \, (\nabla_\theta \log P_\theta(X))^\top
    \right].
\end{equation}

In the IOMS setting, $X$ consists of multiple rounds of measurement. Each measurement is performed on a quantum state $\rho^{\theta; X^{(<i)}}$ using a POVM element $E^{i; X^{(<i)}}$.  
Since the strategy across different measurements is independent, we have
\begin{equation}
    \rho^{\theta; X^{(<i)}} = \rho^{\theta}, 
    \qquad 
    E^{i; X^{(<i)}} = E^{i}.
\end{equation}
Because the pair $(\rho^{\theta}, E^{i})$ fully determines the measurement distribution, we obtain
\begin{equation}
    P_\theta(X^{i} \mid X^{(<i)}) 
    = P(X^{i} \mid \rho^{\theta; X^{(<i)}}, E^{i; X^{(<i)}})
    = P(X^{i} \mid \rho^{\theta}, E^{i}).
\end{equation}

It is well known that the total Fisher information decomposes as the sum of the single-sample Fisher information terms corresponding to the conditional distributions:
\begin{equation}
    I(\theta) = \sum_{i=1}^{N} I_i(\theta),
\end{equation}
where
\begin{equation}\label{eq:FI}
\begin{split}
    I_i(\theta)
    &= \mathbb{E}\!\left[
        \nabla_\theta P_\theta(X^{i} \mid X^{(<i)}) \,
        \nabla_\theta P_\theta(X^{i} \mid X^{(<i)})^\top
    \right] \\
    &= \mathbb{E}\!\left[
        \nabla_\theta P(X^{i} \mid \rho^{\theta}, E^{i}) \,
        \nabla_\theta P(X^{i} \mid \rho^{\theta}, E^{i})^\top
    \right].
\end{split}
\end{equation}

\subsection{Sample Complexity Lower Bound}

In this section, we derive the sample complexity lower bound for the IOMS protocol using Fisher information. The proof consists of three parts. First, we derive the form of the Fisher information matrix for both non-concatenating and concatenating protocols. Then, we prove that the trace of the Fisher information matrix is upper bounded around a local regime for the concatenating protocol. Finally, we apply the Bayesian Cramér--Rao bound to establish the sample complexity lower bound.

\begin{lemma}\label{lm:FImatrix}
For a non-concatenating IOMS protocol, the Fisher information matrix for a single measurement defined in Eq.~\ref{eq:FI} is given by
\begin{equation}\label{eq:FIM}
    I(\theta) = P_s K^\top D(p)^{-1} K P_s,
\end{equation}
where $K_{ij} = \operatorname{Tr}(E_i\rho_j)$, $\rho_j = U_j \rho U_j^\dagger$, $D(p)$ is a diagonal matrix with entries $D(p)_{ii} = p_i = \sum_{j=1}^r \operatorname{Tr}(E_i\rho_j)\theta_j$, and $P_s = I - u u^\top/d$ is the projector enforcing $\sum_i \theta_i = 1$.
\end{lemma}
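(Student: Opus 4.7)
The plan is to reduce a single IOMS round to a categorical distribution whose outcome probabilities depend linearly on $\theta$, compute the Fisher information directly from the defining formula, and finally impose the simplex constraint $\sum_a\theta_a=1$ via the projector $P_s$.

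First, I would write the post-channel state as $\Lambda(\rho)=\sum_a\theta_a\, U_a\rho U_a^\dagger=\sum_a\theta_a\rho_a$, where $\rho$ is the prepared state on the possibly ancilla-extended Hilbert space and each $U_a$ is understood as extended by identity on the ancilla (this is the only place the ancilla enters the argument). Because the protocol is non-concatenating, a single channel use is followed by a fixed POVM $\{E_i\}$, so the outcome distribution is categorical with
\begin{equation*}
  P_\theta(X=i) \;=\; \operatorname{Tr}(E_i\Lambda(\rho)) \;=\; \sum_{a=1}^r K_{ia}\,\theta_a, \qquad K_{ia}=\operatorname{Tr}(E_i\rho_a),
\end{equation*}
so $P_\theta(X=\cdot)$ depends linearly on $\theta$, with Jacobian exactly $K$.

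Second, I would plug this categorical model into the single-sample Fisher information formula from Eq.~\ref{eq:FI}. Linearity gives $\nabla_\theta P_\theta(X=i)=K^\top e_i$, hence
\begin{equation*}
  I_{\mathrm{unc}}(\theta) \;=\; \sum_i \frac{(K^\top e_i)(e_i^\top K)}{P_\theta(X=i)} \;=\; K^\top D(p)^{-1} K,
\end{equation*}
where $D(p)$ is the diagonal matrix of outcome probabilities $p_i=\sum_a K_{ia}\theta_a$. This is exactly the standard Fisher information for a linearly parametrized categorical model, before any constraints are imposed.

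Third, I would justify the $P_s$-sandwich. Since $\theta$ lies on the probability simplex, admissible perturbations $\delta\theta$ satisfy $\mathbf{1}^\top\delta\theta=0$, so the effective parameter space is the tangent hyperplane orthogonal to $\mathbf{1}$. One may reparametrize by $r-1$ free coordinates and push the unconstrained Fisher matrix forward, or equivalently (and more symmetrically) sandwich $I_{\mathrm{unc}}(\theta)$ with the orthogonal projector $P_s$ onto that tangent space, producing $I(\theta)=P_s K^\top D(p)^{-1} K P_s$ as claimed. The only real obstacle is the bookkeeping around this constraint: one must verify that replacing the score by its $P_s$-projection gives the correct Fisher information on the restricted model (which follows from the chain rule under any smooth local parametrization of the simplex), so that the projector appears symmetrically on both sides and the null direction $\mathbf{1}$ is correctly annihilated. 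Everything else is a one-line computation.
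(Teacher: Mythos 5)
Your proposal is correct and follows essentially the same route as the paper's proof: derive the linear categorical model $p_i(\theta)=\sum_a K_{ia}\theta_a$ from the Born rule, compute the unconstrained Fisher matrix $K^\top D(p)^{-1}K$, and sandwich with the orthogonal projector onto the tangent space of the simplex. Your extra remark justifying the $P_s$-sandwich via a smooth local reparametrization is a welcome clarification of a step the paper states without comment, but it does not change the argument.
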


\begin{proof}
For the non-concatenating IOMS protocol, we prepare a state, apply the mixed unitary channel, and measure the output state. The intermediate unitaries can be absorbed into the preparation and measurement steps, so without loss of generality we can ignore them. By the Born rule, the measurement distribution is
\begin{equation}\label{eq:kthetap}
\begin{split}
    P(X = i \mid \rho^\theta, E)
    &= \operatorname{Tr}(E_i \rho^\theta) \\
    &= \operatorname{Tr}\!\left(E_i \sum_{j=1}^r \theta_j U_j \rho U_j^\dagger \right) \\
    &= \sum_{j=1}^r \theta_j \operatorname{Tr}(E_i \rho_j) \\
    &= \sum_{j=1}^r \theta_j K_{ij}.
\end{split}
\end{equation}
Denote $p_i(\theta) = \sum_j \theta_j K_{ij}$ and let $D(p)$ be the diagonal matrix with entries $p_i(\theta)$.

The Fisher information matrix for a single measurement is defined as
\begin{equation}
    I(\theta)_{ab}
    = \sum_i p_i(\theta)
    \frac{\partial \log p_i(\theta)}{\partial \theta_a}
    \frac{\partial \log p_i(\theta)}{\partial \theta_b}.
\end{equation}
Since
\begin{equation}
    \frac{\partial p_i(\theta)}{\partial \theta_a} = K_{ia},
    \qquad
    \frac{\partial \log p_i(\theta)}{\partial \theta_a}
    = \frac{K_{ia}}{p_i(\theta)},
\end{equation}
we have
\begin{equation}
    I(\theta)_{ab} = \sum_i \frac{K_{ia} K_{ib}}{p_i(\theta)}.
\end{equation}
In matrix form, this can be written as
\begin{equation}
    I_{\mathrm{full}}(\theta) = K^\top D(p)^{-1} K.
\end{equation}

However, $\theta$ is a probability vector satisfying $\sum_{j=1}^r \theta_j = 1$, meaning that the true parameter space is the $(r-1)$-dimensional simplex. Its tangent space at any interior point is
\begin{equation}
    T_\theta = \{ v \in \mathbb{R}^r \mid \mathbf{1}^\top v = 0 \}.
\end{equation}
Let $P_s = I - u u^\top/d$ be the orthogonal projector onto this subspace. The Fisher information restricted to the simplex is therefore
\begin{equation}
    I(\theta) = P_s I_{\mathrm{full}}(\theta) P_s
    = P_s K^\top D(p)^{-1} K P_s.
\end{equation}
This completes the proof.
\end{proof}

Now we give the form of the Fisher information matrix for the concatenating IOMS protocol. Notice that if we concatenate $k$ mixed unitary channels together, with intermediate unitary operations, the resulting overall channel is equivalent to a mixed unitary channel of rank $r^k$. We define this effective mixed unitary channel as follows.

\begin{definition}[Effective mixed unitary channel of $k$-fold concatenation]\label{def:effectk}
Consider a protocol that concatenates $k$ mixed unitary channels, each of the form
\begin{equation}
    \Lambda(\rho) = \sum_{a=1}^r \theta_a U_a \rho U_a^\dagger,
\end{equation}
with intermediate unitaries $V_1,\dots,V_k$ applied between successive uses of $\Lambda$.
The resulting overall channel is itself a mixed unitary channel of rank $r^k$, given by
\begin{equation}
    \tilde{\Lambda}(\cdot)
    = \sum_{i=1}^{r^k} \tilde{\theta}_i\, \tilde{U}_i (\cdot) \tilde{U}_i^\dagger,
\end{equation}
where the index $i$ corresponds to a $k$-tuple $(a_1,\dots,a_k) \in [r]^k$, 
the coefficient vector satisfies $\tilde{\theta} = \theta^{\otimes k}$, and the effective unitary operators are
\begin{equation}
    \tilde{U}_{a_1,\dots,a_k}
    = V_k U_{a_k} V_{k-1} U_{a_{k-1}} \cdots V_2 U_{a_2} V_1 U_{a_1},
\end{equation}
with $V_1,\dots,V_k$ denoting the intermediate unitaries in the concatenating IOMS protocol.
\end{definition}

We now express the Fisher information for the concatenating IOMS protocol in terms of the Fisher information of the effective mixed unitary channel defined above.

\begin{lemma}\label{lm:FI-k-concat}
For a $k$-fold concatenating IOMS protocol, let $\tilde{I}(\tilde{\theta})$ denote the Fisher information matrix for a single measurement of the effective mixed unitary channel $\tilde{\Lambda}$ in Definition~\ref{def:effectk}, viewed as a function of the parameter $\tilde{\theta} \in \mathbb{R}^{r^k}$. Then the Fisher information matrix with respect to the original parameter $\theta \in \mathbb{R}^r$ is
\begin{equation}\label{eq:concFI}
    I^{(k)}(\theta)
    = J(\theta)^{\top} \, \tilde{I}(\tilde{\theta}) \, J(\theta),
\end{equation}
where $\tilde{\theta} = \theta^{\otimes k}$ and
\begin{equation}
    J(\theta) = \frac{\partial \tilde{\theta}}{\partial \theta}
\end{equation}
is the Jacobian of the tensor-product map $\theta \mapsto \theta^{\otimes k}$.
\end{lemma}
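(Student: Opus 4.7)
The plan is to apply the chain-rule (reparametrization) identity for Fisher information. As noted in the remarks preceding Definition~\ref{def:effectk}, a single round of the $k$-fold concatenating protocol is statistically indistinguishable from a single use of the effective rank-$r^k$ mixed unitary channel $\tilde\Lambda$ on the same preparation and POVM, with underlying parameter vector $\tilde\theta = \theta^{\otimes k}$. Hence for any outcome $i$,
\begin{equation*}
    P_\theta^{(k)}(X=i) \;=\; \tilde P_{\tilde\theta}(X=i)\Big|_{\tilde\theta=\theta^{\otimes k}},
\end{equation*}
where $\tilde P_{\tilde\theta}$ denotes the (non-concatenating) measurement distribution induced by $\tilde\Lambda$ with parameter $\tilde\theta$.

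Next, I would differentiate through this reparametrization. Since $\tilde\theta = \theta^{\otimes k}$ is smooth in $\theta$ with Jacobian $J(\theta)=\partial\tilde\theta/\partial\theta \in \mathbb{R}^{r^k\times r}$, the chain rule yields
\begin{equation*}
    \nabla_\theta \log P_\theta^{(k)}(X) \;=\; J(\theta)^{\top}\, \nabla_{\tilde\theta}\log \tilde P_{\tilde\theta}(X)\Big|_{\tilde\theta=\theta^{\otimes k}}.
\end{equation*}
Substituting into definition~(\ref{eq:FImat-def}) of the Fisher information matrix and pulling the deterministic factor $J(\theta)$ outside the expectation gives
\begin{equation*}
    I^{(k)}(\theta) \;=\; J(\theta)^{\top}\, \mathbb{E}\!\left[\nabla_{\tilde\theta}\log \tilde P_{\tilde\theta}(X)\,\nabla_{\tilde\theta}\log \tilde P_{\tilde\theta}(X)^{\top}\right] J(\theta) \;=\; J(\theta)^{\top}\,\tilde I(\tilde\theta)\,J(\theta),
\end{equation*}
which is the claimed equation~(\ref{eq:concFI}).

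The main point to check carefully---and what I would view as the only real obstacle---is the interaction of the simplex projectors $P_s$ and $\tilde P_s$ (appearing in Lemma~\ref{lm:FImatrix}) with $J(\theta)$. Because $\sum_i\theta_i=1$ forces $\sum_{\boldsymbol a}\tilde\theta_{\boldsymbol a}=(\sum_i\theta_i)^k=1$, differentiating along any tangent direction $v\in T_\theta$ gives $J(\theta)v\in T_{\tilde\theta}$, so $J$ maps the $\theta$-simplex tangent space into the $\tilde\theta$-simplex tangent space. Consequently the projections built into $\tilde I$ are consistent with the projection one would perform on the $\theta$-side after the transformation, and no extra terms appear in the quadratic form. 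Beyond this bookkeeping, the argument is simply the standard covariance transformation law for Fisher information under a smooth reparametrization; the content specific to this setting is the identification of $k$-fold concatenation with a rank-$r^k$ effective mixed unitary channel whose parameter is the tensor power $\theta^{\otimes k}$.
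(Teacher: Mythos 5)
Your proof is correct and follows essentially the same route as the paper's: identify the single-round outcome distribution of the $k$-fold concatenating protocol with that of the effective channel under the reparametrization $\tilde\theta=\theta^{\otimes k}$, apply the chain rule to $\nabla_\theta\log p$, and factor the deterministic Jacobian out of the expectation. Your added remark on the compatibility of the simplex tangent spaces is a sensible piece of bookkeeping that the paper's proof leaves implicit.
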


\begin{proof}
For a fixed POVM and protocol, the outcome distribution of a single measurement for the concatenating IOMS protocol depends on $\theta$ only through the effective parameter $\tilde{\theta} = \theta^{\otimes k}$. Denote this distribution by
\begin{equation}
    p(x \mid \theta) = p\bigl(x \mid \tilde{\theta}(\theta)\bigr).
\end{equation}
Let $\tilde{I}(\tilde{\theta})$ be the Fisher information matrix with respect to $\tilde{\theta}$, i.e.,
\begin{equation}
    \tilde{I}(\tilde{\theta})
    = \mathbb{E}_{x \sim p(\cdot \mid \tilde{\theta})}
    \left[
        \nabla_{\tilde{\theta}} \log p(x \mid \tilde{\theta})\,
        \nabla_{\tilde{\theta}} \log p(x \mid \tilde{\theta})^{\top}
    \right].
\end{equation}
By the chain rule,
\begin{equation}
    \nabla_{\theta} \log p(x \mid \theta)
    = J(\theta)^{\top} \nabla_{\tilde{\theta}} \log p(x \mid \tilde{\theta}),
\end{equation}
where $J(\theta) = \partial \tilde{\theta} / \partial \theta$ is the Jacobian of the map $\theta \mapsto \theta^{\otimes k}$. Therefore,
\begin{equation}
\begin{split}
    I^{(k)}(\theta)
    &= \mathbb{E}_{x \sim p(\cdot \mid \theta)}
    \left[
        \nabla_{\theta} \log p(x \mid \theta)\,
        \nabla_{\theta} \log p(x \mid \theta)^{\top}
    \right] \\
    &= \mathbb{E}_{x \sim p(\cdot \mid \tilde{\theta})}
    \left[
        J(\theta)^{\top} \nabla_{\tilde{\theta}} \log p(x \mid \tilde{\theta})\,
        \nabla_{\tilde{\theta}} \log p(x \mid \tilde{\theta})^{\top} J(\theta)
    \right] \\
    &= J(\theta)^{\top} \tilde{I}(\tilde{\theta}) J(\theta),
\end{split}
\end{equation}
which proves the claim.
\end{proof}

Before bounding the trace of the Fisher information, we need one auxiliary lemma, which bounds the sum of the maximum entries of $K$ per row.

\begin{lemma}\label{lm:suc}
For the $K$ matrix in Eq.~\ref{eq:FIM}, we have
\begin{equation}
    \sum_{i=1}^s \max_j \{K_{ij}\} \leq d.
\end{equation}
\end{lemma}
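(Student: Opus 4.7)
The plan is to bound each row-maximum $\max_j K_{ij}$ by the trace of the POVM element $E_i$, and then sum over $i$ using completeness of the POVM.

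First I would observe the two structural facts at play: the states $\rho_j = U_j \rho U_j^\dagger$ are density operators on the full $d$-dimensional (system$+$ancilla) space, so $\rho_j \succeq 0$ and $\|\rho_j\|_{\mathrm{op}} \le \operatorname{Tr}(\rho_j) = 1$; and the POVM elements satisfy $E_i \succeq 0$ with $\sum_{i=1}^s E_i = I_d$. Next, for any fixed $i,j$ I would apply the standard positive-operator inequality
\begin{equation}
    K_{ij} = \operatorname{Tr}(E_i \rho_j) = \operatorname{Tr}\!\bigl(E_i^{1/2} \rho_j E_i^{1/2}\bigr) \le \|\rho_j\|_{\mathrm{op}} \operatorname{Tr}(E_i) \le \operatorname{Tr}(E_i),
\end{equation}
where the last step uses $\|\rho_j\|_{\mathrm{op}} \le 1$. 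Since the right-hand side is independent of $j$, taking the maximum over $j$ gives $\max_j K_{ij} \le \operatorname{Tr}(E_i)$.

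Finally I would sum over the POVM outcomes:
\begin{equation}
    \sum_{i=1}^s \max_j K_{ij} \le \sum_{i=1}^s \operatorname{Tr}(E_i) = \operatorname{Tr}\!\left(\sum_{i=1}^s E_i\right) = \operatorname{Tr}(I_d) = d,
\end{equation}
which is the claimed bound. There is no serious obstacle here; the only thing worth being careful about is that the dimension entering $\operatorname{Tr}(I) = d$ is the full dimension (including ancilla), consistent with the convention $d = 2^{n+m}$ in the notation section, and that the inequality $\operatorname{Tr}(AB) \le \|B\|_{\mathrm{op}} \operatorname{Tr}(A)$ genuinely requires $A \succeq 0$, which holds because each $E_i$ is a POVM element.
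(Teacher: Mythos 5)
Your proposal is correct and follows essentially the same route as the paper: bound $\operatorname{Tr}(E_i\rho_j)\le\operatorname{Tr}(E_i)$ uniformly in $j$, then sum using POVM completeness to get $\operatorname{Tr}(I_d)=d$. You merely spell out the intermediate operator inequality that the paper leaves implicit.
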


\begin{proof}
\begin{equation}
    \sum_{i=1}^s \max_j \{K_{ij}\} 
    = \sum_{i=1}^s \max_j{\operatorname{Tr}(E_i\rho_j)} 
    \leq \sum_{i=1}^s \operatorname{Tr}(E_i) 
    = \operatorname{Tr}(I^d) = d.
\end{equation}
\end{proof}

It is worth emphasizing the intuition behind why the row sum of the $K$ matrix leads to the bound on Fisher information. The row sum of $K$ corresponds to the maximum success probability of identifying \emph{which unitary has been applied}, which represents the total information that can be extracted from a single sample. The inability to perfectly identify the applied unitary is precisely what leads to information loss.

Now we are ready to bound the trace of the Fisher information matrix of a single measurement for both non-concatenating and concatenating protocols at the local parameter corresponding to the uniform distribution.

\begin{theorem}\label{th:trFI}
For the non-concatenating protocol, we have
\begin{equation}\label{eq:trInonc}
    \operatorname{Tr}(I(u)) \le r d.
\end{equation}
For the concatenating protocol with at most $k$ concatenations, we have
\begin{equation}\label{eq:trIc}
    \operatorname{Tr}(I(u)) \le k^2 r d.
\end{equation}
\end{theorem}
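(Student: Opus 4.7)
The plan for the non-concatenating bound is to drop the outer projectors in Lemma~\ref{lm:FImatrix}: since $P_s$ is an orthogonal projector and $K^\top D(p)^{-1}K\succeq 0$,
\begin{equation*}
\operatorname{Tr}(I(u)) = \operatorname{Tr}\bigl(P_s K^\top D(p(u))^{-1} K P_s\bigr)\le \operatorname{Tr}\bigl(K^\top D(p(u))^{-1} K\bigr) = \sum_{i,j}\frac{K_{ij}^2}{p_i(u)}.
\end{equation*}
At $\theta=u$ we have $p_i(u)=\tfrac{1}{r}\sum_{j}K_{ij}$, so row $i$ contributes at most $r\bigl(\sum_j K_{ij}^2\bigr)/\bigl(\sum_j K_{ij}\bigr)\le r\max_j K_{ij}$, using only $K_{ij}\ge 0$. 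Summing over $i$ and invoking Lemma~\ref{lm:suc} then yields $rd$.

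For the concatenating bound, the plan is to combine Lemma~\ref{lm:FI-k-concat} with the non-concatenating estimate applied to the effective rank-$r^k$ channel. At $\theta=u$ the effective parameter is $\tilde u = u^{\otimes k}$, which is exactly the uniform distribution on $[r^k]$; since the effective channel still has dimension $d$, the non-concatenating argument (with $r$ replaced by $r^k$) gives $\operatorname{Tr}(\tilde I(\tilde u))\le r^k d$. Using cyclicity of trace together with the standard inequality $\operatorname{Tr}(MA)\le \lambda_{\max}(A)\operatorname{Tr}(M)$ for $M,A\succeq 0$,
\begin{equation*}
\operatorname{Tr}\bigl(I^{(k)}(u)\bigr) = \operatorname{Tr}\bigl(\tilde I(\tilde u)\,J(u)J(u)^\top\bigr) \le \lambda_{\max}\bigl(J(u)J(u)^\top\bigr)\cdot r^k d,
\end{equation*}
so the whole problem reduces to computing the top eigenvalue of $J(u)^\top J(u)$ (which has the same nonzero spectrum as $J(u)J(u)^\top$).

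Next I would differentiate $\tilde\theta_{a_1,\dots,a_k}=\theta_{a_1}\cdots\theta_{a_k}$ at $\theta=u$ to obtain $J(u)_{(a_1,\dots,a_k),\,a}=n_a(a_1,\dots,a_k)/r^{k-1}$, where $n_a(I)$ counts how many times $a$ appears in the multi-index $I$. Splitting $\sum_I n_a(I)\,n_b(I) = \sum_{l,l'}\sum_I[a_l=a][a_{l'}=b]$ into the $l=l'$ and $l\ne l'$ contributions gives the closed form
\begin{equation*}
J(u)^\top J(u) = \frac{k}{r^{k-1}}\,I_r + \frac{k(k-1)}{r^{k}}\,\mathbf{1}\mathbf{1}^\top,
\end{equation*}
whose top eigenvalue (along the $\mathbf{1}$ direction) equals $k/r^{k-1}+k(k-1)/r^{k-1}=k^2/r^{k-1}$. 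Plugging back yields $\operatorname{Tr}(I^{(k)}(u))\le (k^2/r^{k-1})\cdot r^k d = k^2 r d$, as required.

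The main obstacle is the Gram-matrix calculation: one needs to carefully separate the $l=l'$ count ($kr^{k-1}$) from the $l\ne l'$ count ($k(k-1)r^{k-2}$), and track the $r^{-2(k-1)}$ normalization from $J(u)_{I,a}\propto r^{-(k-1)}$. Once the explicit form of $J(u)^\top J(u)$ and its top eigenvalue $k^2/r^{k-1}$ are in hand, the concatenating bound follows immediately from the trace--operator inequality and the non-concatenating estimate already established in the first step.
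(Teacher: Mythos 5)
Your proof is correct and follows essentially the same route as the paper: drop the projectors, bound each row of $K^\top D(p)^{-1}K$ at $\theta=u$ by $r\max_j K_{ij}$ and invoke Lemma~\ref{lm:suc}, then for concatenation combine Lemma~\ref{lm:FI-k-concat} with the rank-$r^k$ non-concatenating bound and the top eigenvalue $k^2 r^{1-k}$ of $J(u)^\top J(u)$. (Your explicit Gram matrix $k r^{1-k} I_r + k(k-1)r^{-k}\mathbf{1}\mathbf{1}^\top$ is the correct one; the paper's displayed $k r^{-k}\bigl(I_r+(k-1)\mathbf{1}\mathbf{1}^\top\bigr)$ has a slip in the identity coefficient, though the final bound $k^2 r d$ is unaffected.)
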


\begin{proof}
We first prove Eq.~\ref{eq:trInonc}. From Eq.~\ref{eq:FIM}, we have
\begin{equation}
\begin{split}
    \operatorname{Tr}I(u)
    &= \operatorname{Tr}(P_s K^\top D(p)^{-1} K P_s) \\
    &\le \operatorname{Tr}(K^\top D(p)^{-1} K) \\
    &= \sum_{t=1}^{s}\frac{\sum_{j=1}^{r} K_{tj}^{2}}{p_t} \\
    &\le \sum_{t=1}^{s}\frac{(\max_{1\le j\le r} K_{tj})\sum_{j=1}^{r} K_{tj}}{p_t} \\
    &= \sum_{t=1}^{s}\frac{(\max_{1\le j\le r} K_{tj})\sum_{j=1}^{r} K_{tj}}{[Ku]_t} \\
    &= \sum_{t=1}^{s}\frac{(\max_{1\le j\le r} K_{tj})\sum_{j=1}^{r} K_{tj}}{\frac{1}{r}\sum_{j=1}^{r} K_{tj}} \\
    &= r\sum_{t=1}^{s}\max_{1\le j\le r} K_{tj}\\
    &\le r\,d.
\end{split}
\end{equation}
Where the final inequality uses Lemma~\ref{lm:suc}. Thus, the bound in Eq.~\ref{eq:trInonc} is established.

Now we prove Eq.~\ref{eq:trIc}. From Eq.~\ref{eq:concFI}, we have
\begin{equation}\label{eq:trIJconc}
    \operatorname{Tr}(I^{(k)}(\theta)) 
    = \operatorname{Tr}(J(\theta)^{\top} \, \tilde{I}(\tilde{\theta}) \, J(\theta)) 
    \le \operatorname{Tr}(\tilde{I}(\tilde{\theta})) \| J(\theta) \|_2^2 
    \le r^k d  \| J(\theta) \|_2^2,
\end{equation}
where the second inequality follows because $\tilde{I}(\tilde{\theta})$ is the expectation of rank-one operators, and the last inequality because the effective mixed unitary channel has rank $r^k$.

Next, we upper bound $\| J(\theta) \|_2^2$.
For the tensor map $f(\theta) = \theta^{\otimes k}$ with $\theta \in \mathbb{R}^r$, 
the Jacobian at the uniform point $\theta_* = \mathbf{1}/r$ has entries
\begin{equation}
    J_{ab} = m(a,b)\,r^{1-k},
\end{equation}
where $m(a,b)$ counts how many times index $b$ appears in the $k$-tuple $a \in [r]^k$.
A direct combinatorial calculation gives
\begin{equation}
    J^\top J = k\,r^{-k}\!\left(I_r + (k-1)\mathbf{1}\mathbf{1}^\top\right),
\end{equation}
whose largest eigenvalue is
\begin{equation}
    \lambda_{\max} = k\,r^{-k}\!\left(1 + (k-1)r\right).
\end{equation}
Hence, the operator two-norm of $J$ is
\begin{equation}
    \|J(\theta_*)\|_2
    = \sqrt{k\,r^{-k}\!\left(1 + (k-1)r\right)} \le k r^{(1-k)/2}.
\end{equation}
Substituting this into Eq.~\ref{eq:trIJconc}, we obtain
\begin{equation}
    \operatorname{Tr}(I^{(k)}(\theta_*)) \le r^k d  \| J(\theta) \|_2^2 \le k^2 r d,
\end{equation}
which completes the proof of Eq.~\ref{eq:trIc}.
\end{proof}

Now that we have an upper bound on the trace of the Fisher information matrix, we are ready to bound the sample complexity. The well-known Cramér--Rao bound applies only to unbiased estimators. Van Trees' inequality, on the other hand, applies to general (possibly biased) estimators. Informally, Van Trees' inequality states that the mean-square error (MSE) is lower bounded by the inverse of the sum of the expected Fisher information in a region and the information contributed by the prior distribution on that region. The following lemma shows that, in an asymptotic sense, if the Fisher information is continuous, then the presence of a local parameter with small Fisher information implies a sample-complexity scaling for biased estimators that matches the one suggested by the Cramér--Rao bound for unbiased estimators.

\begin{lemma}[Sample complexity lower bound via Van Trees' inequality]\label{lm:vt}
Let $\{\mathsf{P}_\theta : \theta \in \Theta \subset \mathbb{R}^r\}$ be a regular model with independent single-sample Fisher information matrix $I_1(\theta)$, assumed continuous in $\theta$. 
If an estimator $\theta_N$ achieves 
\begin{equation}
    \sup_{\theta\in\Theta}\mathbb{E}_\theta\!\left[\|\theta_N-\theta\|^2\right] \le \varepsilon^2,
\end{equation}
for all $\theta$, then there exists $\theta_0$ such that 
\begin{equation}
    N = \Omega\!\left(\frac{r^2}{\operatorname{Tr}(I_1(\theta_0))\,\varepsilon^2}\right).
\end{equation}
\end{lemma}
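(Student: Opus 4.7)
The plan is to deduce the bound from the multivariate Van Trees (Bayesian Cram\'er--Rao) inequality applied with a prior concentrated near $\theta_0$, and then to let the prior contract as $\varepsilon \to 0$. First, pick any interior point $\theta_0 \in \Theta$ and a smooth, compactly supported bump density $\pi_\delta$ whose support lies in a ball $B_\delta(\theta_0) \subset \Theta$. A standard construction (e.g., a separable product of a fixed $C^\infty_c$ bump, rescaled by $\delta$) gives a prior that vanishes on $\partial B_\delta(\theta_0)$ and whose Fisher information
$$J(\pi_\delta) := \int \pi_\delta\, (\nabla \log \pi_\delta)(\nabla \log \pi_\delta)^\top \, d\theta$$
satisfies $\operatorname{Tr}(J(\pi_\delta)) \le C_0\, r/\delta^2$ for a universal constant $C_0$.

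Next, invoke the multivariate Van Trees inequality on the $N$-fold product model:
$$\mathbb{E}_{\pi_\delta}\mathbb{E}_\theta\!\left[(\theta_N - \theta)(\theta_N - \theta)^\top\right] \succeq \bigl[\, N\, \mathbb{E}_{\pi_\delta}[I_1(\theta)] + J(\pi_\delta)\, \bigr]^{-1}.$$
Taking the trace of both sides and applying the elementary Cauchy--Schwarz inequality $\operatorname{Tr}(A^{-1}) \ge r^2/\operatorname{Tr}(A)$ for any positive-definite $r \times r$ matrix $A$ yields
$$\mathbb{E}_{\pi_\delta}\mathbb{E}_\theta\bigl[\|\theta_N - \theta\|^2\bigr] \ge \frac{r^2}{N\, \mathbb{E}_{\pi_\delta}[\operatorname{Tr}(I_1(\theta))] + \operatorname{Tr}(J(\pi_\delta))}.$$
The hypothesis $\sup_\theta \mathbb{E}_\theta[\|\theta_N - \theta\|^2] \le \varepsilon^2$ implies the Bayes risk on the left is also at most $\varepsilon^2$. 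Continuity of $\theta \mapsto \operatorname{Tr}(I_1(\theta))$ gives $\mathbb{E}_{\pi_\delta}[\operatorname{Tr}(I_1(\theta))] \le (1+\eta(\delta))\,\operatorname{Tr}(I_1(\theta_0))$ with $\eta(\delta) \to 0$ as $\delta \to 0$, so rearranging produces
$$N\, (1 + \eta(\delta))\, \operatorname{Tr}(I_1(\theta_0)) \ge \frac{r^2}{\varepsilon^2} - \frac{C_0\, r}{\delta^2}.$$

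Finally, balance the two competing requirements on $\delta$. Any schedule $\delta = \delta(\varepsilon) \to 0$ that decays strictly slower than $\varepsilon$ (for example $\delta = \varepsilon^{1/2}$) simultaneously drives $\eta(\delta) \to 0$ and $C_0 r/\delta^2 = o(r^2/\varepsilon^2)$, so the right-hand side is $(1-o(1))\, r^2/\varepsilon^2$ and the conclusion $N = \Omega\bigl(r^2/(\operatorname{Tr}(I_1(\theta_0))\,\varepsilon^2)\bigr)$ follows. The main obstacle is precisely this balancing: $\delta$ must be small enough for local continuity of $I_1$ to give an essentially tight comparison, yet large enough that the prior's own information $\operatorname{Tr}(J(\pi_\delta))$ is asymptotically dominated by the data's accumulated information $N\operatorname{Tr}(I_1(\theta_0))$. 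A secondary technical point is to ensure the regularity required by Van Trees (smoothness of $\pi_\delta$ and vanishing on the boundary of its support), together with the fact that in the application $\theta$ lies on the probability simplex, so $\pi_\delta$ should be supported strictly inside the simplex and the calculus carried out in the appropriate tangent space.
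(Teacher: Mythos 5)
Your proposal is correct and follows essentially the same route as the paper's proof: the multivariate Van Trees inequality with a smooth compactly supported prior localized near $\theta_0$, the trace inequality $\operatorname{Tr}(A^{-1}) \ge r^2/\operatorname{Tr}(A)$, continuity of $I_1$ to compare the averaged Fisher information with its value at $\theta_0$, and domination of the Bayes risk by the worst-case MSE. The only difference is that you make explicit the balancing of the prior width $\delta$ against $\varepsilon$ (so that $\operatorname{Tr}(J(\pi_\delta)) = o(r^2/\varepsilon^2)$ while $\eta(\delta)\to 0$), whereas the paper fixes the neighborhood and absorbs the prior's information term into the asymptotic constant; your version is a slightly more careful rendering of the same argument.
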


\begin{proof}
By continuity of $I_1(\theta)$, there exists $\theta_0$ and a neighborhood 
$U$ such that $\operatorname{Tr}(I_1(\theta)) \le 2J_0$ for all $\theta\in U$, 
with $J_0 = \operatorname{Tr}(I_1(\theta_0))$.

Choose a smooth prior $\pi$ supported in $U$. 
The Van Trees inequality gives
\begin{equation}
\begin{split}
    R_\pi(\theta_N)
    &= \mathbb{E}_\pi\mathbb{E}_\theta[\|\theta_N-\theta\|^2] \\
    &\ge 
    \operatorname{Tr}\!\left(\big(N\,\mathbb{E}_\pi[I_1(\theta)]+I(\pi)\big)^{-1}\right) \\
    &\ge \frac{r^2}{2N J_0+\operatorname{Tr}(I(\pi))},
\end{split}
\end{equation}
where the last inequality uses 
$\operatorname{Tr}(A^{-1}) \ge r^2/\operatorname{Tr}(A)$ for $A\succ0$.

Since the uniform MSE dominates the Bayes risk, 
$\varepsilon^2 \ge R_\pi(\theta_N)$ implies 
\begin{equation}
    N \ge c\,\frac{r^2}{J_0\,\varepsilon^2}
\end{equation}
for some constant $c>0$. 
\end{proof}

Now we are ready to prove our first main result: the sample complexity lower bound for the IOMS protocol.

\begin{theorem}
    For IOMS protocols with i.i.d.\ measurement outcomes, in order to learn $\theta$ of a mixed unitary channel such that
    \begin{equation}
        \sup_{\theta\in\Theta}\mathbb{E}_\theta\!\left[\|\theta_N-\theta\|^2\right] \le \varepsilon^2,
    \end{equation}
    the number of samples $N_1$ must satisfy
    \begin{equation}
        N_1 \ge \Omega\!\left(\frac{r}{k d \varepsilon^2}\right),
    \end{equation}
    where $r$ is the rank of the mixed unitary channel, $d$ is the dimension of the input probe state, and $k$ is the number of concatenations allowed.
\end{theorem}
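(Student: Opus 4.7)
The plan is to chain the trace bound on per-measurement Fisher information (Theorem~\ref{th:trFI}) with the Van Trees-type sample-complexity lower bound (Lemma~\ref{lm:vt}), and then translate the count of independent measurement rounds into the count of channel uses, which is the unit in which $N_1$ is reported.

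First, I would specialize to the uniform point $\theta_0 = u = \mathbf{1}/r$, which lies in the relative interior of the probability simplex and therefore satisfies the continuity hypothesis needed by Lemma~\ref{lm:vt}. At this point, Theorem~\ref{th:trFI} directly supplies the per-measurement trace bound $\operatorname{Tr}(I_1(u)) \le k^2 r d$, with the non-concatenating case $k=1$ absorbed into the same expression. Continuity of $I_1$ in a neighborhood of $u$ follows because the Fisher information depends smoothly on $\theta$ through $D(p)$ and $J(\theta)$, and the denominators $p_i(\theta)$ stay strictly positive near $u$.

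Second, I would invoke Lemma~\ref{lm:vt} with $J_0 = k^2 r d$ and $\theta_0 = u$. Writing $M$ for the number of independent measurement rounds in the IOMS protocol, the lemma yields
\[
    M \ge c\,\frac{r^2}{J_0\,\varepsilon^2} \ge c\,\frac{r}{k^2 d\,\varepsilon^2}
\]
for some absolute constant $c > 0$. Since each round of the $k$-fold concatenating IOMS protocol consumes exactly $k$ uses of the mixed unitary channel, the total channel-use count is $N_1 = k M$, and multiplying through by $k$ gives
\[
    N_1 \ge c\,\frac{r}{k d\,\varepsilon^2},
\]
which is the claimed bound.

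The only non-mechanical step is the conversion from measurement rounds to channel uses: concatenation produces a $k^2$ gain in per-round Fisher information, but each round costs $k$ channel uses, so the net improvement per channel use is only linear in $k$. This is the piece of bookkeeping that distinguishes the $1/k$ appearing in the theorem from the $1/k^2$ that would follow from a naive substitution of the trace bound into Van Trees. Aside from that accounting point, the proof is a direct combination of already-established ingredients, so I do not anticipate any genuine technical obstacle.
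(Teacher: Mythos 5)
Your proposal is correct and follows exactly the paper's route: substitute the trace bound of Theorem~\ref{th:trFI} at the uniform point into Lemma~\ref{lm:vt}, then convert measurement rounds to channel uses at a cost of $k$ per round, which turns the $1/k^2$ from the Fisher-information bound into the $1/k$ in the statement. The paper's own proof is a two-line version of the same argument; your added remarks on continuity of $I_1$ near $u$ and on applying the lemma at $\theta_0 = u$ are sensible elaborations rather than deviations.
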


\begin{proof}
    Substituting Theorem~\ref{th:trFI} into Lemma~\ref{lm:vt} gives the result. The dependence on $k$ instead of $k^2$ is because each measurement on a $k$-concatenating channel counts as $k$ samples.
\end{proof}

\subsection{Learning Random Mixed Unitary Channels}

The learning difficulty of a mixed unitary channel depends on three factors: its rank, the system dimension, and the specific choice of unitaries. In the previous sections, we showed that the sample complexity lower bound scales with $\frac{r}{d_{\Lambda}}$, which implies that for a mixed unitary channel to be easy to learn, its rank must scale comparably to the system dimension. However, we have not yet discussed the effect of the choice of unitaries on learnability, which is of practical interest—specifically, identifying which choices of unitaries make the channel easier to learn.

Perhaps surprisingly, we show that in high dimensions, randomly selected unitaries are quite easy to learn using a non-concatenating IOMS protocol with ancilla for $r \le d_{\Lambda}^2$, which already corresponds to the maximum degrees of freedom for a general channel. The intuition is that in high-dimensional spaces, the Hilbert space is extremely large, so the states generated by applying Haar-random unitaries to the maximally entangled state are almost mutually orthogonal with overwhelmingly high probability. Hence, one can effectively distinguish which unitary was applied, making Lemma~\ref{lm:suc} nearly tight, and consequently leading to minimal information loss. Notably, this can be achieved using the so-called \emph{Pretty Good Measurement} (PGM)~\cite{iten2016pgm} to distinguish the applied unitaries. We formalize this in the following lemma.

\begin{lemma}\label{lm:pgmkii}
Define the quantum ensemble $\mathcal{E} = \{\rho_i\}_{i=1}^r$, where
\begin{equation}
    \rho_i = (U_i \otimes \mathbb{I}) 
    \lvert \psi \rangle \langle \psi \rvert 
    (U_i^\dagger \otimes \mathbb{I}), 
    \qquad 
    \lvert \psi \rangle
    = \frac{1}{\sqrt{d_\Lambda}}
      \sum_{j=1}^{d_\Lambda} 
      \lvert j \rangle \otimes \lvert j \rangle .
\end{equation}
The Pretty Good Measurement (PGM) associated with the uniform mixture of the ensemble is defined by the POVM elements
\begin{equation}
    E_i 
    = \sigma^{-\frac{1}{2}}
      \left( \frac{1}{r} \rho_i \right)
      \sigma^{-\frac{1}{2}},
    \qquad i = 1,\dots,r,
\end{equation}
where $\sigma = \frac{1}{r}\sum_i \rho_i$. Then, for any $r \le d_{\Lambda}^2$, we have
\begin{equation}\label{eq:kii}
    \lim_{r\rightarrow\infty} 
    \operatorname{Pr}\!\left(
        \forall i,\; \operatorname{Tr}(E_i\rho_i) \ge 0.7
    \right) = 1.
\end{equation}
\end{lemma}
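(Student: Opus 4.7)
The plan is to reduce the PGM success probability to a property of the Gram matrix $G\in\mathbb{C}^{r\times r}$ with entries $G_{ij}=\langle\psi_i|\psi_j\rangle=\tfrac{1}{d_\Lambda}\operatorname{Tr}(U_i^\dagger U_j)$. Letting $V=[|\psi_1\rangle,\ldots,|\psi_r\rangle]$ and $\sigma=VV^\dagger/r$, an SVD computation shows $V^\dagger\sigma^{-1/2}V=\sqrt{r}\,G^{1/2}$ on the support of $\sigma$, which yields the clean identity
\[
\operatorname{Tr}(E_i\rho_i)=\tfrac{1}{r}\bigl(\langle\psi_i|\sigma^{-1/2}|\psi_i\rangle\bigr)^2=\bigl[G^{1/2}\bigr]_{ii}^{2}.
\]
The statement thus reduces to showing that every diagonal entry of $G^{1/2}$ exceeds $\sqrt{0.7}$ with probability tending to $1$.

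By Haar invariance and exchangeability of the $U_i$, the diagonals $[G^{1/2}]_{ii}$ are identically distributed with common mean $\tfrac{1}{r}\mathbb{E}\operatorname{Tr}(G^{1/2})$. For $i\ne j$, $U_i^\dagger U_j$ is itself Haar, so standard random matrix theory implies that the empirical spectral distribution of $G$ converges to the Marchenko--Pastur law $\mu_\gamma$ with $\gamma=r/d_\Lambda^2\in(0,1]$. The map $\gamma\mapsto\int\sqrt{x}\,d\mu_\gamma(x)$ is monotonically decreasing, and a direct evaluation at the worst case $\gamma=1$ gives $\int\sqrt{x}\,d\mu_1(x)=\tfrac{8}{3\pi}>\sqrt{0.7}$. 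Hence for all $r\le d_\Lambda^2$ and $r$ sufficiently large, $\mathbb{E}[[G^{1/2}]_{ii}]\ge\sqrt{0.7}+\delta$ for a fixed constant $\delta>0$.

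The remaining ingredient is concentration. I would invoke Levy's concentration of measure on the product unitary group $U(d_\Lambda)^r$: after bounding the Lipschitz constant of the map $(U_1,\ldots,U_r)\mapsto[G^{1/2}]_{ii}$, each diagonal entry lies within $\delta$ of its mean with sub-Gaussian probability $1-\exp(-c\,d_\Lambda^{\alpha})$ for some $\alpha>0$. A union bound over the $r\le d_\Lambda^2$ indices then gives $\Pr(\forall i,\,\operatorname{Tr}(E_i\rho_i)\ge 0.7)\ge 1-d_\Lambda^2\exp(-c\,d_\Lambda^\alpha)\to 1$, which is the claim.

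The main obstacle is obtaining a usable Lipschitz constant for $[G^{1/2}]_{ii}$ uniformly across the allowed range of $r$. When $\gamma=r/d_\Lambda^2$ is close to $1$, the Marchenko--Pastur support reaches down to $0$, so $x\mapsto\sqrt{x}$ is not smooth at the lower spectral edge and a naive perturbation bound for $G^{1/2}$ degenerates. I would handle this either by a soft-edge argument---showing that $\lambda_{\min}(G)$ stays above some positive threshold with overwhelming probability---or by working with the integral representation $\sqrt{x}=\tfrac{1}{\pi}\int_0^\infty\tfrac{x\,dt}{\sqrt{t}(t+x)}$, which shifts the analysis of $[G^{1/2}]_{ii}$ to the better-behaved resolvent $(G+tI)^{-1}$ whose concentration is standard.
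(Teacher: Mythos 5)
Your reduction to the Gram matrix is correct and is in fact the standard route: for a pure-state ensemble with uniform priors the PGM success probabilities are exactly $\operatorname{Tr}(E_i\rho_i)=[G^{1/2}]_{ii}^2$, and your Marchenko--Pastur computation $\int\sqrt{x}\,d\mu_1(x)=\tfrac{8}{3\pi}$, squared, reproduces the $0.72$ figure. This first half is a legitimate, more self-contained alternative to what the paper does, which is simply to cite Theorem~4.5 of Montanaro~\cite{montanaro2007distinguishability} for $\mathbb{E}[P^{\mathrm{PGM}}(\mathcal{E})]\ge 0.72$ (Montanaro's proof is essentially the argument you sketch). Note, however, that Montanaro's bound is on the \emph{average} $\tfrac{1}{r}\sum_i[G^{1/2}]_{ii}^2$; you need the expectation of each individual diagonal entry, so your appeal to exchangeability of the $U_i$ is a necessary extra step that you correctly include (and that the paper also uses implicitly).

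The genuine gap is the concentration step, which you acknowledge but do not close, and which is the only technically hard part of the lemma. One of your two proposed fixes --- showing $\lambda_{\min}(G)$ stays bounded away from $0$ --- provably fails in the regime $r=d_\Lambda^2$ that the lemma allows and that the subsequent theorem actually uses: at $\gamma=1$ the Marchenko--Pastur density behaves like $\tfrac{1}{2\pi}\sqrt{(4-x)/x}$ near $x=0$, so a constant fraction of the eigenvalues of $G$ lie below any fixed threshold and $\lambda_{\min}(G)\to 0$; there is no soft edge to exploit. The resolvent representation $\sqrt{x}=\tfrac{1}{\pi}\int_0^\infty \tfrac{x\,dt}{\sqrt{t}(t+x)}$ is more promising, but the $t$-integral of the naive Lipschitz constants of $[G(G+tI)^{-1}]_{ii}$ diverges at $t=0$ precisely when the spectrum reaches the origin, so this route also requires a nontrivial additional idea that you have not supplied. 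The paper avoids the issue differently: it applies the concentration theorem for Lipschitz functions of tuples of independent Haar-random unitaries (Theorem~5.5 of~\cite{nayak2023rigidity}) directly to $f(U_1,\dots,U_r)=\operatorname{Tr}(E_i\rho_i)$, claiming a Lipschitz constant of $2$ by varying only $\rho_i$ with $E_i$ held fixed. (That computation itself glosses over the dependence of $E_i$ on all the unitaries through $\sigma^{-1/2}$ --- the very difficulty you correctly identify --- so your proposal has at least diagnosed the real issue; but as written it does not resolve it, and without a quantitative concentration bound for $[G^{1/2}]_{ii}$ the union bound and hence the limit in Eq.~\eqref{eq:kii} do not follow.)
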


\begin{proof}
Our proof builds upon two known results. First, we use Theorem~4.5 of~\cite{montanaro2007distinguishability} to show that the expectation value $\mathbb{E}[\operatorname{Tr}(E_i\rho_i)]$, which equals the average success probability of the PGM, satisfies $\mathbb{E}[P^{\mathrm{PGM}}(\mathcal{E})] \ge 0.72$. Second, we apply Theorem~5.5 of~\cite{nayak2023rigidity} to show that each $\operatorname{Tr}(E_i\rho_i)$ is highly concentrated around its mean. Combining these two steps and applying a union bound gives the desired result.

From Theorem~4.5 in~\cite{montanaro2007distinguishability}, we have
\begin{equation}
    \mathbb{E}[\operatorname{Tr}(E_i\rho_i)] 
    = \mathbb{E}\!\left[\frac{1}{r}\sum_{i=1}^r \operatorname{Tr}(E_i\rho_i)\right]
    = \mathbb{E}[P^{\mathrm{PGM}}(\mathcal{E})] 
    \ge 0.72.
\end{equation}

Next, we prove that $\operatorname{Tr}(E_i\rho_i)$ is highly concentrated around its mean. 
Using Theorem~5.5 in~\cite{nayak2023rigidity}, which states that any $k$-Lipschitz function of Haar-random unitaries is tightly concentrated around its mean, we first show that $\operatorname{Tr}(E_i\rho_i)$ is $2$-Lipschitz with respect to the unitaries:
\begin{equation}
\begin{split}
    |f(U_1,\dots,U_r) - f(U_1',\dots,U_r')| 
    &= |\operatorname{Tr}(E_i\rho_i) - \operatorname{Tr}(E_i'\rho_i')| \\
    &\leq \|E_i\|\, \|\rho_i - \rho_i'\|_1 \\
    &\leq \|\rho_i - \rho_i'\|_1 \\
    &\leq 2 \|\,|\psi_i\rangle - |\psi_i'\rangle\,\| \\
    &= 2\sqrt{\langle \psi | (U_i - U_i')^\dagger (U_i - U_i') | \psi \rangle} \\
    &\leq 2\|U_i - U_i'\|.
\end{split}
\end{equation}
Thus, the function is 2-Lipschitz. 
By Theorem~5.5 in~\cite{nayak2023rigidity}, we then have
\begin{equation}
\begin{split}
    \Pr\!\left( \operatorname{Tr}(E_i \rho_i) 
    \le \mathbb{E}\bigl[\operatorname{Tr}(E_i \rho_i)\bigr] - t \right)
    &\le \exp\!\left( -\frac{(d - 1)t^2}{96} \right) \\
    &\le \exp\!\left( -\frac{(\sqrt{r} - 1)t^2}{96} \right).
\end{split}
\end{equation}

This shows that $\operatorname{Tr}(E_i\rho_i)$ is sharply concentrated around its mean. 
Using the union bound, we obtain
\begin{equation}\label{eq:union}
\begin{split}
    \operatorname{Pr}\!\left(\forall i,\; \operatorname{Tr}(E_i\rho_i) \ge 0.7\right) 
    &\ge 1 - r\, \operatorname{Pr}\!\left(\operatorname{Tr}(E_i\rho_i) < 0.7\right) \\
    &\ge 1 - r\, \exp\!\left(-\frac{(\sqrt{r}-1)\,0.02^2}{96}\right),
\end{split}
\end{equation}
and the right-hand side approaches $1$ as $r \to \infty$, proving Eq.~\ref{eq:kii}.
\end{proof}

The learning algorithm we use is simple: prepare a maximally entangled probe state, apply the mixed unitary channel to one subsystem, and perform the PGM to identify which unitary was applied. Repeating this procedure produces i.i.d.\ measurement outcomes whose distribution depends linearly on the unknown parameter vector $\theta$. Consequently, estimating this distribution and solving a linear system yields an estimator for $\theta$. The detailed procedure is given below.

\begin{algorithm}[H]
\caption{PGM-based estimator for a mixed unitary channel}
\label{alg:pgm-mixed-unitary}
\begin{algorithmic}[1]
\Require Known unitaries $\{U_i\}_{i=1}^r \subset \mathbb{C}^{d_\Lambda \times d_\Lambda}$ and access to
$\Lambda(\rho) = \sum_{i=1}^r \theta_i U_i \rho U_i^\dagger$ with $\sum_{i=1}^r \theta_i = 1$; number of channel uses $N$.
\Ensure Estimate $\tilde{\theta} \in \mathbb{R}^r$.
\State Prepare the maximally entangled state
$|\psi\rangle = d_\Lambda^{-1/2} \sum_{j=1}^{d_\Lambda} |j\rangle \otimes |j\rangle$.
\State For each $j$, set $\rho_j = (U_j \otimes \mathbb{I})|\psi\rangle\langle\psi|(U_j^\dagger \otimes \mathbb{I})$ and
$\sigma = \frac{1}{r}\sum_{j=1}^r \rho_j$.
\State Define the PGM elements $E_i = \sigma^{-1/2}(\rho_i/r)\sigma^{-1/2}$ and overlaps
$K_{ij} = \operatorname{Tr}(E_i \rho_j)$; compute $K^{-1}$ (or pseudoinverse $K^+$).
\State Initialize counts $N_i \gets 0$ for $i = 1,\dots,r$.
\For{$t = 1$ \textbf{to} $N$}
    \State Prepare $|\psi\rangle$ and apply $(\Lambda \otimes \operatorname{id})(|\psi\rangle\langle\psi|)$.
    \State Measure with POVM $\{E_i\}_{i=1}^r$, obtain outcome $x_t$, and update $N_{x_t} \gets N_{x_t} + 1$.
\EndFor
\State Set empirical frequencies $\hat{p}_i = N_i/N$ for $i = 1,\dots,r$.
\State Estimate $\tilde{\theta} = K^{-1}\hat{p}$ (or $\tilde{\theta} = K^{+}\hat{p}$ if $K$ is not invertible).
\State \Return $\tilde{\theta}$.
\end{algorithmic}
\end{algorithm}

We now show that this estimator achieves the optimal sample-complexity scaling for random mixed unitary channels.

\begin{theorem}
In high dimensions, for a $d_{\Lambda}$-dimensional mixed unitary channel with rank $r = d_{\Lambda}^2$, where the unitaries are independently drawn from the Haar measure, Algorithm~\ref{alg:pgm-mixed-unitary} learns $\theta$ such that $\mathbb{E}[\|\tilde{\theta} -\theta \|^2] \le \varepsilon^2$ using
\begin{equation}
    N = \mathcal{O}\!\left(\frac{1}{\varepsilon^2}\right)
\end{equation}
samples.
\end{theorem}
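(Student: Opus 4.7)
The plan is to analyze the linear-inversion estimator $\tilde\theta = K^{-1}\hat p$ directly. Writing $p = K\theta$ for the true PGM outcome distribution on the entangled probe (which follows from Eq.~\ref{eq:kthetap} specialized to this setting), the error decomposes as $\tilde\theta-\theta = K^{-1}(\hat p-p)$, so
\[
\mathbb{E}\bigl[\|\tilde\theta-\theta\|^2\bigr] \le \|K^{-1}\|_2^{2}\,\mathbb{E}\bigl[\|\hat p-p\|^2\bigr].
\]
The second factor is immediate: $\hat p$ is the empirical distribution of a categorical random variable with mean $p$, so independence across the $N$ rounds gives $\mathbb{E}[\|\hat p-p\|^2] = \tfrac{1}{N}\sum_i p_i(1-p_i)\le 1/N$. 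The proof therefore reduces to showing that $\|K^{-1}\|_2$ is bounded by a universal constant with probability $1-o(1)$ over the Haar draw of $\{U_i\}$.

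The key structural fact I will exploit is that the PGM matrix $K$ is doubly stochastic. For the row sums, $\sum_j K_{ij} = \operatorname{Tr}(E_i\sum_j\rho_j) = r\operatorname{Tr}(E_i\sigma) = \operatorname{Tr}(\sigma^{-1/2}\rho_i\sigma^{-1/2}\sigma) = \operatorname{Tr}(\rho_i) = 1$, using that each $\rho_i$ is supported on $\operatorname{supp}(\sigma)$. For the columns, $\sum_i K_{ij} = \operatorname{Tr}\bigl((\sum_i E_i)\rho_j\bigr) = \operatorname{Tr}(\rho_j)=1$ by POVM completeness on $\operatorname{supp}(\sigma)$. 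Combined with Lemma~\ref{lm:pgmkii}, which forces $K_{ii}\ge 0.7$ simultaneously for all $i$ with probability tending to $1$, I can write $K = I-M$ where $M$ has zero row and column sums, $|M_{ii}|\le 0.3$, and $\sum_{j\ne i}|M_{ij}| = 1-K_{ii}\le 0.3$. Hence $\|M\|_1,\|M\|_\infty\le 0.6$, so $\|M\|_2\le\sqrt{\|M\|_1\|M\|_\infty}\le 0.6<1$, and a Neumann series yields $\|K^{-1}\|_2\le(1-0.6)^{-1}\le 3$.

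Putting the two pieces together gives $\mathbb{E}[\|\tilde\theta-\theta\|^2]\le 9/N$ on the high-probability Haar event, so $N = O(1/\varepsilon^2)$ suffices. The main obstacle I anticipate is bookkeeping the low-probability failure event from Lemma~\ref{lm:pgmkii}: since $\|K^{-1}\|_2$ is itself a random variable of the Haar draw, the final statement must either be read as holding with high probability over the unitaries, or one must argue that the worst-case contribution from the failure event is negligible (for instance, by clipping the output to the probability simplex so that $\|\tilde\theta-\theta\|$ is always bounded by a constant). The remaining steps are routine linear algebra combined with the categorical-variance bound.
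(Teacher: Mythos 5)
Your proposal is correct and follows essentially the same route as the paper: linear inversion $\tilde\theta = K^{-1}\hat p$, the categorical variance bound $\mathbb{E}[\|\hat p - p\|^2]\le 1/N$, and Lemma~\ref{lm:pgmkii} to force diagonal dominance of the doubly stochastic $K$. The only difference is cosmetic: the paper bounds $\|K^{-1}\|_2$ via symmetry of $K$ plus Gershgorin ($\lambda_{\min}(K)\ge 0.4$, constant $6.25$), whereas you use a Neumann series with $\|I-K\|_2\le\sqrt{\|I-K\|_1\,\|I-K\|_\infty}\le 0.6$ (constant $9$); your explicit verification of the row and column sums of $K$ and your remark about conditioning on the high-probability Haar event are details the paper leaves implicit.
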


\begin{proof}
For the IOMS protocol, the probability distribution for each measurement outcome is given by Eq.~\ref{eq:kthetap}, i.e., $p = K\theta$, where $K_{ij} = \operatorname{Tr}(E_i \rho_j)$. 
Since we use the PGM and a maximally entangled probe state, Lemma~\ref{lm:pgmkii} implies that $\forall i,\; K_{ii} > 0.7$ with overwhelmingly high probability. 
Under this condition, because $K$ is symmetric with row and column sums equal to $1$, Gerschgorin’s theorem gives
\begin{equation}
    \lambda_{\min}(K) \ge 0.7 - (1 - 0.7) = 0.4.
\end{equation}

The measurements are i.i.d.\ samples from $p = K\theta$. 
By estimating $p$ and solving $\theta = K^{-1}\hat{p}$, we obtain
\begin{equation}
    \mathbb{E}\!\left[\|\hat{p} - p\|_2^2\right] 
    = \frac{1}{N}\!\left(\sum_{i=1}^d p_i(1 - p_i)\right)
    \le \frac{1}{N}.
\end{equation}
Consequently,
\begin{equation}
    \mathbb{E}\!\left[\|\tilde{\theta} - \theta\|_2^2\right]
    = \mathbb{E}\!\left[\|K^{-1}(\hat{p} - p)\|_2^2\right]
    \le \|K^{-1}\|_2^2 \, \mathbb{E}\!\left[\|\hat{p} - p\|_2^2\right]
    \le \frac{\|K^{-1}\|_2^2}{N}.
\end{equation}
Using $\lambda_{\min}(K)\ge 0.4$, we have $\|K^{-1}\|_2 \le 1/0.4 = 2.5$, giving
\begin{equation}
    \mathbb{E}\!\left[\|\tilde{\theta} - \theta\|_2^2\right]
    \le \frac{(2.5)^2}{N}
    = \frac{6.25}{N}.
\end{equation}
Thus, the MSE of parameter estimation scales as $\mathcal{O}(1/N)$ with an explicit constant bounded by $6.25$. Therefore,
\begin{equation}
    \mathbb{E}\!\left[\|\tilde{\theta} - \theta\|_2^2\right] \le \varepsilon^2
    \quad \text{whenever} \quad
    N \ge \frac{6.25}{\varepsilon^2}.
\end{equation}
Hence, there exists a universal constant $C \le 6.25$ such that $N \ge C/\varepsilon^2$, completing the proof.
\end{proof}

\section{Conclusions}

We studied the learnability of mixed unitary channels and formally established an \emph{exponential separation} in sample complexity between protocols that may employ entanglement across channel uses and those that do not. Our theorems recover prior results on Pauli channels as a special case, but are formulated in terms of the mean\mbox{-}squared error (MSE), rather than the more commonly used $\varepsilon$--$\delta$ formulation in the quantum learning literature. There are ongoing research aims to extend our analysis to $\varepsilon$--$\delta$ type performance guarantees via Fisher information.

Within our general framework, we show that ancilla systems are necessary for achieving optimal efficiency only when the channel rank exceeds the system dimension; otherwise, optimal (or near\mbox{-}optimal) learning can be achieved without ancilla, with success probability approaching one as $d \to \infty$. Furthermore, we demonstrate that random mixed unitary channels are particularly easy to learn by explicitly constructing an algorithm that attains the sample\mbox{-}complexity lower bound, thereby showing that our bound is asymptotically tight.

\textit{Practical directions.} Mixed unitary channels offer a flexible error model for quantum error mitigation beyond Pauli noise. Because Pauli channels typically require entanglement to reach optimal sample complexity---and because the number of Pauli terms grows exponentially with the number of qubits---Pauli\mbox{-}only models can be both resource\mbox{-}intensive and brittle under truncation. By contrast, mixed unitary channel models can capture structured, non\mbox{-}Pauli noise with fewer mixture components, and in many instances admit entanglement\mbox{-}free (or lower\mbox{-}entanglement) learning protocols that are provably optimal. Exploring mixed unitary\mbox{-}based mitigation pipelines on near\mbox{-}term devices is a promising avenue. 

\textit{Theoretical directions.} A natural next step is to extend our Fisher\mbox{-}information--based analysis to broader classes of quantum channels. While general Fisher metrics for quantum channels are known in full generality~\cite{yuan2017fidelity}, their breadth can obscure the structural insights that guide algorithm design. Our goal is to identify intermediate\mbox{-}complexity families---richer than mixed unitary channels yet still structured---for which one can derive \emph{simple, informative bounds} that (i) pinpoint what makes a channel easy or hard to learn (e.g., rank, dimension, symmetry, or commutativity), and (ii) come with constructive protocols that saturate these bounds. Additional open problems include characterizing finite\mbox{-}sample (non\mbox{-}asymptotic) rates, sharpening resource trade\mbox{-}offs (entanglement, ancilla dimension, circuit depth, and classical post\mbox{-}processing).

\textit{Outlook.} Overall, our results provide a clean criterion for when entanglement is truly needed and establish baseline\mbox{-}optimal procedures for random mixed unitary channel learning, opening a path toward principled error models and scalable learning protocols in more general settings.

\section*{Acknowledgements}

We thank Senrui Chen for helpful discussions on the differences between mixed unitary and Pauli channels, and Hyukgun Kwon for introducing the Fisher-information approach to sample-complexity bounds in quantum learning. We acknowledge support from the ARO(W911NF-23-1-0077), ARO MURI (W911NF-21-1-0325), AFOSR MURI (FA9550-21-1-0209, FA9550-23-1-0338), DARPA (HR0011-24-9-0359, HR0011-24-9-0361), NSF (ERC-1941583, OMA-2137642, OSI-2326767, CCF-2312755, OSI-2426975), and the Packard Foundation (2020-71479).


\begin{thebibliography}{99}

\bibitem{chen2022quantum}
Chen, Senrui and Zhou, Sisi and Seif, Alireza and Jiang, Liang.
Quantum advantages for Pauli channel estimation.
\emph{Physical Review A} \textbf{105}, 032435 (2022).

\bibitem{flammia2020efficient}
Flammia, Steven T., and Wallman, Joel J.
Efficient estimation of Pauli channels.
\emph{ACM Transactions on Quantum Computing} \textbf{1}(1), 1--32 (2020).

\bibitem{chen2024tight}
Chen, Senrui, et al.
Tight bounds on Pauli channel learning without entanglement.
\emph{Physical Review Letters} \textbf{132}(18), 180805 (2024).

\bibitem{cai2023qem}
Cai, Zhenyu, et al.
Quantum error mitigation.
\emph{Reviews of Modern Physics} \textbf{95}(4), 045005 (2023).

\bibitem{yuan2017fidelity}
Yuan, Haidong and Fung, Chi-Hang Fred.
Fidelity and Fisher information on quantum channels.
\emph{New Journal of Physics} \textbf{19}, 113039 (2017).

\bibitem{Hausladen01121994}
Hausladen, Paul and Wootters, William K.
A `Pretty Good' Measurement for Distinguishing Quantum States.
\emph{Journal of Modern Optics} \textbf{41}, 2385--2390 (1994).

\bibitem{Gill1995vanTrees}
Gill, Richard D. and Levit, Boris Y.
Applications of the van Trees inequality: a Bayesian Cram\'er--Rao bound.
\emph{Bernoulli} \textbf{1} (1995).

\bibitem{montanaro2007distinguishability}
Montanaro, Ashley.
On the distinguishability of random quantum states.
\emph{Communications in Mathematical Physics} \textbf{273}, 619--636 (2007).

\bibitem{nayak2023rigidity}
Nayak, Ashwin and Yuen, Henry.
Rigidity of superdense coding.
\emph{ACM Transactions on Quantum Computing} \textbf{4}, 1--39 (2023).

\bibitem{iten2016pgm}
Iten, Raban, Renes, Joseph M., and Sutter, David.
Pretty good measures in quantum information theory.
\emph{IEEE Transactions on Information Theory} \textbf{63}(2), 1270--1279 (2016).

\end{thebibliography}
\end{document}